\documentclass[12pt]{article}
\usepackage[height=220mm,width=150mm]{geometry}
\usepackage{array}
\usepackage{color}
\usepackage{mathrsfs}
\usepackage{hyperref}
\usepackage{amssymb}
\usepackage{amsmath}
\usepackage{amsthm}
\usepackage{color}
\allowdisplaybreaks
\usepackage{graphicx}
\usepackage{subfigure}
\usepackage{tikz}
\usetikzlibrary{shapes.arrows,chains,positioning}
\newtheorem{Theorem}{Theorem}[section]
\newtheorem{Lemma}[Theorem]{Lemma}

\newtheorem{Proposition}[Theorem]{Proposition}
\newtheorem{Corollary}[Theorem]{Corollary}

\def\nocolor#1{}

\begin{document}

\title{Uncertainty Principles for the Offset Linear Canonical Transform}

\author{Haiye Huo\footnote{Email: hyhuo@ncu.edu.cn}\\
Department of Mathematics, School of Science, Nanchang University,\\ Nanchang~330031, Jiangxi, China \\
}

\date{}
\maketitle

\textit{Abstract}.\,\,
The offset linear canonical transform (OLCT) provides a more general framework for a number of well known linear integral transforms in signal processing and optics, such as Fourier transform, fractional Fourier transform, linear canonical transform. In this paper, to characterize simultaneous localization of a signal and its OLCT, we extend some different uncertainty principles (UPs), including Nazarov's UP, Hardy's UP, Beurling's UP, logarithmic UP and entropic UP, which have already been well studied in the Fourier transform domain over the last few decades, to the OLCT domain in a broader sense.

\textit{Keywords.}
Offset linear canonical transform; Uncertainty principle; Logarithmic uncertainty estimate; Entropic inequality; Localization

\section{Introduction}\label{sec:I}

Uncertainty principle (UP) plays an important role in quantum mechanics \cite{Heisenberg1927} and signal processing \cite{Benedicks1985}. In quantum mechanics, UP was first proposed by the German physicist W. Heisenberg in 1927 \cite{Heisenberg1927}. It basically says that the more precisely the position of a particle is determined, the less precisely its momentum can be known, and vice versa. From the perspective of signal processing, UP can be described as follows: ``One cannot sharply localize a signal in both the time domain and frequency domain simultaneously" (see \cite{Benedicks1985,Grochenig2001} for more details). By using different notations of essential support, there are many different kinds of UPs associated with the Fourier transform, like Heisenberg's UP \cite{HF2017, Heisenberg1927}, Nazarov's UP \cite{Jaming2007, Nazarov1993}, Hardy's UP \cite{HF2017,Hardy1933}, Beurling's UP \cite{Hogan2007}, logarithmic UP \cite{Beckner1995}, entropic UP \cite{FGS1997}, and so on.

It is well known that the offset linear canonical transform (OLCT) \cite{PD2003,Stern2007,XQ2014,XCH2015,ZWZ2016} is a generalized version of Fourier transform and has wide applications in signal processing and optics. Note that UP cannot be avoided and owns its specific form for each time-frequency representation. Therefore, it is necessary to extend the aforementioned UPs to the OLCT domain. In 2007, A. Stern extended Heisenberg's UP from the Fourier transform domain to the OLCT domain \cite{Stern2007}. It states that a nonzero function and its OLCT cannot both be sharply localized. To the best of our knowledge, there are no other results published about UPs associated with the OLCT. Hence, in this paper, the other five UPs for the Fourier transform are extended to the OLCT domain, i.e., Nazarov's UP, Hardy's UP, Beurling's UP, logarithmic UP, and entropic UP.

The rest of the paper is organized as follows. In Section \ref{sec:P}, we recall the notations of the OLCT and the generalized Parseval formula for the OLCT. In Section \ref{sec:U}, we extend the corresponding results of Nazarov's UP, Hardy's UP, Beurling's UP, logarithmic UP, and entropic UP, to the OLCT domain, respectively. In Section \ref{sec:C}, we conclude the paper.

\section{Preliminaries}\label{sec:P}

In this section, let us review the definition of the OLCT and its generalized Parseval formula.

For a given function $f(t)\in L^2(\mathbb{R})$, the definition of its OLCT \cite{KMZ2013} with parameter
$A=\left[
      \begin{array}{cc|c}
        a & b & \tau \\
        c & d & \eta \\
      \end{array}
      \right]$
is
\begin{equation}\label{eq:OLCT}
O_{A}f(u)=O_{A}[f(t)](u)=
    \begin{cases}
    \int_{-\infty}^{+\infty}f(t)K_{A}(t,u){\rm{d}}t, & b\ne0,\\
     \sqrt{d}e^{j\frac{cd}{2}(u-\tau)^2+ju\eta}f(d(u-\tau)), & b=0,\\
     \end{cases}
\end{equation}
where
\[
K_{A}(t,u)=\frac{1}{\sqrt{j2\pi b}}e^{j\frac{a}{2b}t^2-j\frac{1}{b}t(u-\tau)-j\frac{1}{b}u(d\tau-b\eta)+j\frac{d}{2b}(u^2+\tau^2)},
\]
parameters $a,\;b,\;c,\;d,\;\tau,\;\eta\in\mathbb{R}$, and $ad-bc=1$.

From the definition of the OLCT, when $b=0$, the OLCT reduces to a chirp multiplication operator. Hence, without loss of generality, we assume $b>0$ throughout the paper.

By (\ref{eq:OLCT}), one can easily check that the OLCT includes many well-known linear transforms as special cases. For instance, let $A=\left[
      \begin{array}{cc|c}
        0 & 1 & 0 \\
        -1 & 0 & 0 \\
      \end{array}
      \right]$,
the OLCT reduces to the Fourier transform \cite{Grochenig2001}; let
$A=\left[
      \begin{array}{cc|c}
        \cos\alpha & \sin\alpha & 0 \\
        -\sin\alpha & \cos\alpha & 0 \\
      \end{array}
      \right],$
the OLCT reduces to the fractional Fourier transform \cite{TZW2011}; let
$A=\left[
      \begin{array}{cc|c}
        a & b & 0 \\
        c & d & 0 \\
      \end{array}
      \right],$
the OLCT reduces to the linear canonical transform \cite{HS2015,SLZ2014,WRL2012,XS2013}, etc.

Next, we introduce one of important properties for the OLCT, i.e., its generalized Parseval formula \cite{BZ2017}, as follows:
\begin{equation}\label{OLCT:pro3}
  \int_{\mathbb{R}}f(t)\overline{g(t)}\textrm{d}t=\int_{\mathbb{R}}O_{A}f(u)\overline{O_{A}g(u)}\textrm{d}u,
  \end{equation}
where $\bar{\cdot}$ denotes the complex conjugate. This equation will be used in the following sections.

\section{Uncertainty Principles in the OLCT Domain}\label{sec:U}

Recall that there are many different forms of UPs in the Fourier transform domain, such as Heisenberg's UP, Nazarov's UP, Hardy's UP, Beurling's UP, logarithmic UP, entropic UP, and so on, in terms of different notations of ``localization". As far as we know, in 2009, G. Xu et al \cite{XWX2009New} extended the logarithmic UP and entropic UP to the linear canonical transform domain. Recently, Q. Zhang \cite{Zhang2016} extended the other three UPs: Nazarov's UP, Hardy's UP, Beurling's UP to the linear canonical transform domain. Considering that the OLCT is a generalized version of the Fourier transform or the linear canonical transform, it is natural and interesting to study the simultaneous localization of a function and its OLCT by further extending the aforementioned UPs to the OLCT domain. So far, there exists only one research work on Heisenberg's UP in the OLCT domain. Therefore, in this section, we investigate the other five different forms of UPs associated with the function $f$ and its OLCT $O_Af$, i.e., Nazarov's UP, Hardy's UP, Beurling's UP, logarithmic UP, and entropic UP, for the OLCT.

\subsection{Nazarov's UP}

As for Heisenberg's UP, its localization is measured by smallness of dispersions. By considering another criterion of localization, i.e., smallness of support, Nazarov's UP was first proposed by F.L. Nazarov in 1993 \cite{Nazarov1993}. It argues what happens if a nonzero function and its Fourier transform are only small outside a compact set? Let us recall the concept of Nazarov's UP for the Fourier transform \cite{Jaming2007,Nazarov1993} as follows.

\begin{Proposition}[\cite{Jaming2007,Nazarov1993}]\label{pro:N}
Let $f\in L^2(\mathbb{R})$, and $T,\;\Omega$ be two subsets of $\mathbb{R}$ with finite measure.
Then, there exists a constant $C>0$, such that
\begin{equation}\label{eq:N1}
\int_{\mathbb{R}}|f(t)|^2{\rm d}t\le Ce^{C|T||\Omega|}\Big(\int_{\mathbb{R}\backslash T}|f(t)|^2{\rm d}t
    +\int_{\mathbb{R}\backslash \Omega}|\mathcal{F}f(u)|^2{\rm d}u\Big),
\end{equation}
where $\mathcal{F}$ is the Fourier transform defined by
\[\mathcal{F}f(u)=\frac{1}{\sqrt{2\pi}}\int_{-\infty}^{+\infty}f(t)e^{-jut}{\rm{d}}t,
\]
and $|T|$ is denoted as the Lebesgue measure of $T$.
\end{Proposition}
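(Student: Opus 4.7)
The plan is to reduce the statement to a sharp Logvinenko--Sereda--type estimate for band-limited functions, along the lines of Nazarov's original argument. First I would introduce the band-limiting projection $Q_\Omega f := \mathcal{F}^{-1}(\chi_\Omega \mathcal{F}f)$ and write $f = Q_\Omega f + Q_{\Omega^c} f$, where $\Omega^c = \mathbb{R}\setminus\Omega$. By Plancherel's identity, $\|Q_{\Omega^c} f\|_2^2 = \int_{\mathbb{R}\setminus\Omega}|\mathcal{F}f(u)|^2\,\mathrm{d}u$, so this piece is already controlled by one of the terms on the right-hand side of (\ref{eq:N1}). The whole task therefore reduces to bounding $\|Q_\Omega f\|_2$ by the two allowed quantities, modulo an exponential constant.

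Next I would split $\|Q_\Omega f\|_{L^2(\mathbb{R}\setminus T)} \le \|f\|_{L^2(\mathbb{R}\setminus T)} + \|Q_{\Omega^c} f\|_{L^2(\mathbb{R}\setminus T)} \le \|f\|_{L^2(\mathbb{R}\setminus T)} + \|Q_{\Omega^c} f\|_2$, which already expresses the $L^2$ mass of $Q_\Omega f$ \emph{outside} $T$ in terms of the two quantities appearing on the right-hand side of (\ref{eq:N1}). What is still missing is a way to upgrade this to a bound on the \emph{full} $L^2$ norm of $Q_\Omega f$, and this is exactly where the uncertainty must enter.

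The pivotal ingredient is a sharp Logvinenko--Sereda estimate: for every $g\in L^2(\mathbb{R})$ with $\operatorname{supp}(\mathcal{F}g)\subseteq \Omega$, one has
\[
\|g\|_{L^2(\mathbb{R})} \le C\,e^{C|T||\Omega|}\,\|g\|_{L^2(\mathbb{R}\setminus T)}.
\]
Applying this to $g = Q_\Omega f$, chaining with the previous step, and combining with $\|f\|_2 \le \|Q_\Omega f\|_2 + \|Q_{\Omega^c} f\|_2$, delivers (\ref{eq:N1}) after squaring and absorbing universal factors into $C$.

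The main obstacle is the sharp Logvinenko--Sereda inequality itself. A straightforward Bernstein/Plancherel argument only yields polynomial dependence on $|T||\Omega|$; obtaining the exponential constant $e^{C|T||\Omega|}$ requires Remez/Tur\'an-type inequalities for exponential polynomials on intervals together with a careful covering of $\mathbb{R}$ adapted to the measure of $\Omega$. This is the deepest ingredient in Nazarov's work, and the reason the proposition here is stated with a reference to \cite{Jaming2007, Nazarov1993} rather than derived from scratch.
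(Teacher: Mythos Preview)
The paper does not prove this proposition at all: it is quoted from the literature (Nazarov \cite{Nazarov1993}, Jaming \cite{Jaming2007}) and used as a black box in the proof of Theorem~\ref{Thm:Na}. Your sketch is a reasonable outline of the classical Nazarov argument --- reduce to the band-limited piece via the spectral projection $Q_\Omega$, then invoke the sharp Logvinenko--Sereda/Tur\'an-type estimate for functions with $\operatorname{supp}\mathcal{F}g\subseteq\Omega$ --- and you correctly flag that the exponential constant is the hard part requiring Remez--Tur\'an machinery. But since the paper offers no proof of its own, there is nothing to compare your approach against; the proposition here is purely a citation.
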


Motivated by Proposition~\ref{pro:N}, we next extend the Nazarov's UP to the OLCT domain.

\begin{Theorem}\label{Thm:Na}
Let $f\in L^2(\mathbb{R})$, and $T,\;\Omega$ be two subsets of $\mathbb{R}$ with finite measure. Then, there exists a constant $C>0$, such that
\begin{equation}\label{eq:N1}
\int_{\mathbb{R}}|f(t)|^2{\rm d}t\le Ce^{C|T||\Omega|}\Big(\int_{\mathbb{R}\backslash T}|f(t)|^2{\rm d}t
    +\int_{\mathbb{R}\backslash (\Omega b)}|O_{A}f(u)|^2{\rm d}u\Big).
\end{equation}
\end{Theorem}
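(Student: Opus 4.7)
The natural strategy is to reduce the OLCT statement to the classical Nazarov inequality on the Fourier side (Proposition~\ref{pro:N}) by stripping off the chirp and the affine shifts that distinguish $O_A$ from $\mathcal{F}$. The key observation is that the exponent inside the OLCT kernel can be split as a ``$t$-chirp'' (depending only on $t$), a ``$u$-factor'' (depending only on $u$), and a genuine Fourier-type term $e^{-j(u/b)t}$.

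Concretely, I would introduce the auxiliary function
\[
 g(t)=f(t)\,e^{j\frac{a}{2b}t^{2}+j\frac{\tau}{b}t},
\]
which satisfies $|g(t)|=|f(t)|$, and collect the $u$-only phase $e^{-j\frac{1}{b}u(d\tau-b\eta)+j\frac{d}{2b}(u^{2}+\tau^{2})}$ (whose modulus is $1$). A direct substitution in \eqref{eq:OLCT} then yields
\[
 |O_{A}f(u)|=\frac{1}{\sqrt{b}}\,|\mathcal{F}g(u/b)|,
\]
where $\mathcal{F}$ is the Fourier transform as normalized in Proposition~\ref{pro:N}. This identity is the core bridge: it says the OLCT and the Fourier transform of the chirped copy of $f$ coincide in modulus up to a dilation by $b$.

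Next I would perform the change of variable $u=bv$ in the tail integral. Since $|O_{A}f(bv)|^{2}b\,\mathrm{d}v=|\mathcal{F}g(v)|^{2}\,\mathrm{d}v$, and $\{u\in\mathbb{R}\setminus(\Omega b)\}$ corresponds exactly to $\{v\in\mathbb{R}\setminus\Omega\}$, I obtain
\[
 \int_{\mathbb{R}\setminus(\Omega b)}|O_{A}f(u)|^{2}\,\mathrm{d}u=\int_{\mathbb{R}\setminus\Omega}|\mathcal{F}g(v)|^{2}\,\mathrm{d}v.
\]
Combined with $\int_{\mathbb{R}}|f|^{2}=\int_{\mathbb{R}}|g|^{2}$ and $\int_{\mathbb{R}\setminus T}|f|^{2}=\int_{\mathbb{R}\setminus T}|g|^{2}$, applying Proposition~\ref{pro:N} to $g$ with the subsets $T$ and $\Omega$ delivers exactly \eqref{eq:N1} with the same constant $C$ (and the same exponent $C|T||\Omega|$, which is why the measure of $\Omega b$ does not appear).

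The only technical obstacle is bookkeeping the phase in step one: one has to verify that every term in the OLCT exponent that is not $-j(u/b)t$ is either purely a function of $t$ (absorbed into $g$) or purely a function of $u$ (absorbed into a unimodular prefactor). Everything afterwards is a substitution and a single invocation of the classical Nazarov inequality, so no new analytic input is required. Note also that the dilation of $\Omega$ by $b$ on the OLCT side is natural and consistent with the scaling appearing in the OLCT kernel; it simply reflects that the OLCT's ``frequency variable'' $u$ corresponds to a physical frequency $u/b$.
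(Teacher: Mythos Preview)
Your proposal is correct and follows essentially the same approach as the paper: both introduce the chirped function $g(t)=f(t)e^{j\frac{a}{2b}t^{2}+j\frac{\tau}{b}t}$, identify $|O_{A}f(u)|=\frac{1}{\sqrt{b}}|\mathcal{F}g(u/b)|$, perform the dilation $u=bv$ to convert the $\mathbb{R}\setminus(\Omega b)$ integral into an $\mathbb{R}\setminus\Omega$ integral of $|\mathcal{F}g|^{2}$, and then invoke the classical Nazarov inequality for $g$. The only cosmetic difference is that the paper names the intermediate object $G(u)$ with $G(ub)=\mathcal{F}g(u)$, whereas you write $\mathcal{F}g(u/b)$ directly.
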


\begin{proof}
By the definition of the OLCT (\ref{eq:OLCT}), we can rewrite the OLCT as follows
\begin{equation}\label{Thm:Na1}
O_{A}f(u)=\frac{1}{\sqrt{jb}}e^{-j\frac{1}{b}u(d\tau-b\eta)+j\frac{d}{2b}(u^2+\tau^2)}G(u),
\end{equation}
where
\begin{equation}\label{Thm:Na2}
G(u)\triangleq\frac{1}{\sqrt{2\pi}}\int_{-\infty}^{+\infty}f(t)e^{j\frac{a}{2b}t^2-j\frac{1}{b}t(u-\tau)}{\rm{d}}t.
\end{equation}
Thus,
\begin{equation}\label{Thm:Na3}
|O_{A}f(u)|=\frac{1}{\sqrt{b}}|G(u)|.
\end{equation}
Let $g(t)=f(t)e^{j\frac{a}{2b}t^2+j\frac{1}{b}t\tau}$, then $G(ub)$ is the Fourier transform of $g(t)$, and $|f(t)|=|g(t)|$.
Since $f\in L^2(\mathbb{R})$, then $g\in L^2(\mathbb{R})$. Applying Proposition~\ref{pro:N} with the function $g(t)$ and its Fourier transform $G(ub)$, we get
\begin{equation}\label{Thm:Na4}
\int_{\mathbb{R}}|g(t)|^2{\rm d}t\le Ce^{C|T||\Omega|}\Big(\int_{\mathbb{R}\backslash T}|g(t)|^2{\rm d}t
    +\int_{\mathbb{R}\backslash \Omega}|G(ub)|^2{\rm d}u\Big).
\end{equation}
Substituting (\ref{Thm:Na3}) into (\ref{Thm:Na4}), we obtain
\begin{eqnarray*}
\int_{\mathbb{R}}|f(t)|^2{\rm d}t
&=&\int_{\mathbb{R}}|g(t)|^2{\rm d}t\\
&\le& Ce^{C|T||\Omega|}\Big(\int_{\mathbb{R}\backslash T}|f(t)|^2{\rm d}t
    +\int_{\mathbb{R}\backslash \Omega}|G(ub)|^2{\rm d}u\Big)\\
&=& Ce^{C|T||\Omega|}\Big(\int_{\mathbb{R}\backslash T}|f(t)|^2{\rm d}t
    +\frac{1}{b}\int_{\mathbb{R}\backslash (\Omega b)}|G(u)|^2{\rm d}u\Big)\\
&=& Ce^{C|T||\Omega|}\Big(\int_{\mathbb{R}\backslash T}|f(t)|^2{\rm d}t
    +\frac{1}{b}\int_{\mathbb{R}\backslash (\Omega b)}\big(\sqrt{b}|O_{A}f(u)|\big)^2{\rm d}u\Big)\\
&=& Ce^{C|T||\Omega|}\Big(\int_{\mathbb{R}\backslash T}|f(t)|^2{\rm d}t
    +\int_{\mathbb{R}\backslash (\Omega b)}|O_{A}f(u)|^2{\rm d}u\Big),
\end{eqnarray*}
which completes the proof.
\end{proof}

Theorem~\ref{Thm:Na} is a quantitative version of UP, the constant $C$ on the right-hand side of (\ref{eq:N1}) is hard to determine exactly. By Theorem~\ref{Thm:Na}, we know that it is not possible for a nonzero function $f$ and its OLCT $O_Af$ to both be supported on sets of finite Lebsgue measure.

\subsection{Hardy's UP}

Hardy's UP was first introduced by G.H. Hardy in 1933 \cite{Hardy1933}. Its localization is measured by fast decrease of a function and its Fourier transform. Hardy's UP basically says that it is impossible for a nonzero function and its Fourier transform to decrease very rapidly simultaneously. Let us review Hardy's UP in the Fourier transform domain \cite{Hardy1933,Hogan2007} as follows.

\begin{Proposition}[\cite{Hogan2007}, Theorem~5.2.1]\label{pro:H}
If a function $f\in L^2(\mathbb{R})$ is such that
\[|f(t)|=\mathcal{O}(e^{-\pi\alpha t^2})
\]
and
\[
|\mathcal{F}f(u)|=\mathcal{O}\big(e^{-u^2/(4\pi\alpha)}\big)
\]
for some positive constant $\alpha>0$, then
\begin{equation}\label{eq:H}
f(t)=Ce^{-\pi\alpha t^2}
\end{equation}
for some $C\in \mathbb{C}$.
\end{Proposition}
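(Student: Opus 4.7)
The plan is to prove Proposition~\ref{pro:H} by the classical complex-analytic route: extend $\mathcal{F}f$ to an entire function on $\mathbb{C}$, use an $e^{z^2/(4\pi\alpha)}$ multiplier to absorb the two Gaussian decay hypotheses into pointwise bounds on the coordinate axes, and then invoke a Phragm\'{e}n--Lindel\"{o}f / Liouville argument to force the extension to be constant.

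First, the hypothesis $|f(t)|=\mathcal{O}(e^{-\pi\alpha t^2})$ makes the integral $F(z)=\frac{1}{\sqrt{2\pi}}\int f(t)\,e^{-izt}\,\mathrm{d}t$ converge absolutely for every $z=u+iv\in\mathbb{C}$ and provides an entire extension of $\mathcal{F}f$. Completing the square in $-\pi\alpha t^2+vt$ yields the Paley--Wiener-type estimate $|F(u+iv)|\le B\,e^{v^2/(4\pi\alpha)}$. Setting $H(z):=F(z)\,e^{z^2/(4\pi\alpha)}$ and using $\mathrm{Re}(z^2)=u^2-v^2$, the hypothesis on $\mathcal{F}f$ gives $|H(u)|\le A$ on the real axis, the previous estimate gives $|H(iv)|\le B$ on the imaginary axis, and globally $|H(z)|\le B\,e^{u^2/(4\pi\alpha)}$. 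Hence $H$ is entire, of order at most $2$, and bounded on both coordinate axes.

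The main technical step---and the main obstacle---is to conclude from these properties that $H$ is constant. In each of the four quadrants the opening angle $\pi/2$ matches the growth order $2$ \emph{exactly}, which is the critical case of Phragm\'{e}n--Lindel\"{o}f, and the naive version of the theorem fails: for instance $e^{iz^{2}}$ is entire of order $2$, bounded on both axes, yet unbounded in the plane. Hardy's classical route closes this gap by combining Phragm\'{e}n--Lindel\"{o}f in a slightly sub-critical sector with the extra input that $f$, and hence $F$, lies in $L^{2}(\mathbb{R})$; this rules out rogue $e^{iz^{2}}$-type examples, or equivalently---via Hadamard factorisation of $H$---forces the quadratic exponent and any zeros to be trivial. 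A final appeal to Liouville gives $H\equiv C$, so $F(u)=C\,e^{-u^{2}/(4\pi\alpha)}$, and Fourier inversion produces the Gaussian $f(t)=C'\,e^{-\pi\alpha t^{2}}$ asserted in~(\ref{eq:H}).
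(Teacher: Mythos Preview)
The paper does not prove Proposition~\ref{pro:H}; it is quoted from \cite{Hogan2007} (Theorem~5.2.1) and used as a black box in the proof of Theorem~\ref{thm:H1}. There is therefore no ``paper's own proof'' to compare your attempt against.

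Your outline follows the classical Hardy route, and the setup---the entire extension $F$, the multiplier $e^{z^{2}/(4\pi\alpha)}$, boundedness of $H$ on both coordinate axes, and the identification of the critical Phragm\'{e}n--Lindel\"{o}f angle in each quadrant---is correct. The part that is not worked out, and which you yourself flag as ``the main obstacle,'' is the passage from ``bounded on the axes, order $\le 2$'' to ``bounded everywhere.'' Your explanation of that step is not right: the $L^{2}$ hypothesis is already a consequence of the Gaussian decay assumptions and is \emph{not} the extra input that eliminates $e^{iz^{2}}$-type counterexamples, and the Hadamard-factorisation remark is too vague to constitute an argument (Hadamard alone does not force the quadratic exponent or the zeros to be trivial without further information). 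What actually closes the gap in Hardy's proof is the \emph{anisotropic} estimate you derived, $|H(u+iv)|\le Be^{u^{2}/(4\pi\alpha)}$, which is strictly sharper than a generic order-$2$ bound; one then either splits $f$ into even and odd parts and substitutes $w=z^{2}$ to reduce to an order-$1$ problem, or applies Phragm\'{e}n--Lindel\"{o}f to $H(z)e^{\pm i\epsilon z^{2}}$ in the appropriate quadrants and lets $\epsilon\downarrow 0$. As written, your proposal has the right architecture but leaves the decisive technical step unjustified.
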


Based on Proposition~{\ref{pro:H}}, we derive the corresponding Hardy's UP for the OLCT.

\begin{Theorem}\label{thm:H1}
If a function $f\in L^2(\mathbb{R})$ is such that
\[|f(t)|=\mathcal{O}(e^{-\pi\alpha t^2})
\]
and
\[
|O_A f(u)|=\mathcal{O}\big(e^{-u^2/(4\pi\alpha b^2)}\big)
\]
for some positive constant $\alpha>0$, then
\begin{equation}\label{eq:H1}
f(t)=Ce^{-\big(\pi\alpha+j\frac{a}{2b}\big)t^2-j\frac{1}{b}t\tau}
\end{equation}
for some $C\in \mathbb{C}$. Here $\tau$ is a parameter of $A$.
\end{Theorem}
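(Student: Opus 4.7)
The plan is to reduce the statement to the classical Hardy UP (Proposition~\ref{pro:H}) by the same auxiliary function used in the proof of Theorem~\ref{Thm:Na}. Set
\[
g(t)=f(t)\,e^{j\frac{a}{2b}t^2+j\frac{1}{b}t\tau},
\]
so that $|g(t)|=|f(t)|$ pointwise and, as was verified in the Nazarov proof, $G(ub)=\mathcal{F}g(u)$ where $G$ is the auxiliary function from (\ref{Thm:Na2}), together with the identity $|O_A f(u)|=\tfrac{1}{\sqrt{b}}|G(u)|$ from (\ref{Thm:Na3}).

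First I would transfer the two Gaussian decay hypotheses from $f$ and $O_Af$ to $g$ and $\mathcal{F}g$. The bound on $|g|$ is immediate since $|g(t)|=|f(t)|=\mathcal{O}(e^{-\pi\alpha t^2})$. For the Fourier side, I would write
\[
|\mathcal{F}g(u)|=|G(ub)|=\sqrt{b}\,|O_A f(ub)|=\mathcal{O}\!\left(\sqrt{b}\,e^{-(ub)^2/(4\pi\alpha b^2)}\right)=\mathcal{O}\!\left(e^{-u^2/(4\pi\alpha)}\right),
\]
so $g$ satisfies exactly the hypotheses of Proposition~\ref{pro:H} with the same constant $\alpha$. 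Applying that proposition yields $g(t)=Ce^{-\pi\alpha t^2}$ for some $C\in\mathbb{C}$.

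Finally, I would invert the definition of $g$: multiplying by $e^{-j\frac{a}{2b}t^2-j\frac{1}{b}t\tau}$ gives
\[
f(t)=g(t)\,e^{-j\frac{a}{2b}t^2-j\frac{1}{b}t\tau}=C\,e^{-\left(\pi\alpha+j\frac{a}{2b}\right)t^2-j\frac{1}{b}t\tau},
\]
which is exactly (\ref{eq:H1}).

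There is no serious obstacle here; the proof is essentially a transplant of the Nazarov argument. The one spot requiring care is the decay check on the Fourier side, where the dilation $u\mapsto ub$ must interact correctly with the scaling $1/(4\pi\alpha)$ versus $1/(4\pi\alpha b^2)$ appearing in the hypothesis on $O_Af$; this is precisely why the paper has chosen the exponent $u^2/(4\pi\alpha b^2)$ rather than $u^2/(4\pi\alpha)$, and once one carries out the substitution cleanly the constants line up and the $\sqrt{b}$ factor is absorbed into the big-$\mathcal{O}$.
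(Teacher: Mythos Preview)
Your proposal is correct and follows essentially the same approach as the paper: both define the auxiliary function $g(t)=f(t)e^{j\frac{a}{2b}t^2+j\frac{1}{b}t\tau}$, transfer the Gaussian decay hypotheses to $g$ and $\mathcal{F}g=G(\cdot\, b)$ via the identity $|G(u)|=\sqrt{b}\,|O_Af(u)|$, invoke the classical Hardy UP (Proposition~\ref{pro:H}) to conclude $g(t)=Ce^{-\pi\alpha t^2}$, and then undo the chirp multiplication to recover the stated form of $f$.
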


\begin{proof}
Let $g(t)=f(t)e^{j\frac{a}{2b}t^2+j\frac{1}{b}t\tau}$. Since $|f(t)|=\mathcal{O}(e^{-\pi\alpha t^2})$, we have $|g(t)|=|f(t)|=\mathcal{O}(e^{-\pi\alpha t^2})$. Let the OLCT $O_{A}f(u)$ be rewritten as (\ref{Thm:Na1}), and $G(u)$ be given by (\ref{Thm:Na2}). Thus,
\begin{eqnarray*}
|G(u)|
&=&\sqrt{b}|O_{A}f(u)|\\
&=&\mathcal{O}\big(e^{-u^2/(4\pi\alpha b^2)}\big).
\end{eqnarray*}
Therefore, we have
\begin{eqnarray*}
|G(ub)|
&=&\mathcal{O}\big(e^{-(ub)^2/(4\pi\alpha b^2)}\big)\\
&=&\mathcal{O}\big(e^{-u^2/(4\pi\alpha)}\big).
\end{eqnarray*}
Since $G(ub)$ is the Fourier transform of $g(t)$, it follows from Proposition~\ref{pro:H} that
\[
g(t)=Ce^{-\pi\alpha t^2}
\]
for some $C\in \mathbb{C}$.
Hence, we obtain
\begin{eqnarray*}
f(t)
&=&g(t)e^{-j\frac{a}{2b}t^2-j\frac{1}{b}t\tau}\\
&=&Ce^{-\big(\pi\alpha+j\frac{a}{2b}\big)t^2-j\frac{1}{b}t\tau}.
\end{eqnarray*}
This completes the proof.
\end{proof}

It follows from Theorem~\ref{thm:H1} that it is impossible for a nonzero function $f$ and its OLCT $O_Af$ both to decrease very rapidly.

\subsection{Beurling's UP}

Beurling's UP is a variant of Hardy's UP. It implies the weak form of Hardy's UP immediately. Let us revisit Beurling's UP in the Fourier transform domain \cite{Hogan2007} as follows.

\begin{Proposition}[\cite{Hogan2007}]\label{pro:B}
Let $f\in L^1(\mathbb{R})$ and $\mathcal{F}f\in L^1(\mathbb{R})$. If
\begin{equation}\label{eq:B}
\int_{\mathbb{R}^2}|f(t)\mathcal{F}f(u)|e^{|tu|}{\rm d}t{\rm d}u<\infty,
\end{equation}
then $f=0$.
\end{Proposition}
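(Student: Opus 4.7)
The plan is to prove this Beurling--type statement by the classical complex--analytic route, essentially following H\"ormander's argument. Assume for contradiction that $f \not\equiv 0$; then $\mathcal{F}f \not\equiv 0$ as well, and both are continuous (being Fourier transforms of $L^1$ functions) and in $L^1$. The proof has three logical pieces: (i) extract exponential integrability of $f$ (and of $\mathcal{F}f$) from the hypothesis (\ref{eq:B}) via Fubini; (ii) promote this to a holomorphic extension of $\mathcal{F}f$ to a strip, and then to an entire function of finite order; (iii) apply a Phragm\'en--Lindel\"of type argument to force $\mathcal{F}f \equiv 0$.

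First I would apply Fubini's theorem to (\ref{eq:B}) in the $t$--variable, which shows that for almost every $u \in \mathbb{R}$ one has $|\mathcal{F}f(u)| \cdot \int_{\mathbb{R}} |f(t)|\, e^{|tu|}\, dt < \infty$. Since $\mathcal{F}f$ is continuous and not identically zero, it is nonzero on a set of positive measure, so some $u_0 > 0$ satisfies $\mathcal{F}f(u_0) \neq 0$ and $\int |f(t)|\, e^{|t u_0|}\, dt < \infty$. The function $u \mapsto \int |f(t)|\, e^{|tu|}\, dt$ is even and convex in $u$, so this finiteness propagates to every $|u| \le u_0$. Applying Fubini in the other variable yields, symmetrically, some $t_0 > 0$ with $\int |\mathcal{F}f(u)|\, e^{|t_0 u|}\, du < \infty$. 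Then $F(z) := \int_{\mathbb{R}} f(t) e^{-j tz}\, dt$ is well--defined and holomorphic on the strip $|\mathrm{Im}\, z| < u_0$, and by Fourier inversion $f$ likewise extends holomorphically to $|\mathrm{Im}\, z| < t_0$. By a bootstrap argument—iterating between $f$ and $\mathcal{F}f$ and exploiting that the set of admissible $u_0$'s is unbounded (using convexity of the decay envelope and that $\mathcal{F}f \in C_0$ cannot be nontrivial with compact support once $f$ is shown to be entire)—$F$ extends to an entire function of finite order, in fact order at most~$2$.

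For the final step I would invoke Phragm\'en--Lindel\"of in the Hörmander form. On the real axis, $|F(u)| = |\mathcal{F}f(u)|$ is controlled by the hypothesis in an $L^1$--integrated sense, and on the imaginary axis $|F(iy)| \le \int |f(t)|\, e^{|ty|}\, dt$, which by (\ref{eq:B}) and Fubini is tamed by the requirement that $\int |\mathcal{F}f(u) F(iy)|\, \mathrm{effectively less than} \,\infty$ for the pairing. Concretely, I would consider the auxiliary entire function $\Phi_\varepsilon(z) := F(z)\, e^{\varepsilon z^2}$ for small $\varepsilon > 0$, integrate it against a contour in a quadrant or strip where the growth of $e^{\varepsilon z^2}$ dominates that of $F$, use (\ref{eq:B}) to bound the boundary contributions, and send $\varepsilon \to 0^+$ to deduce that $F$ is forced to vanish on the real axis. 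By Fourier uniqueness $f \equiv 0$, contradicting our standing assumption.

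The main obstacle is the Phragm\'en--Lindel\"of step: one needs pointwise bounds on $|F(x + iy)|$ in the strip that are quantitatively tight enough to be defeated by an $e^{\varepsilon z^2}$ perturbation in the limit $\varepsilon \to 0$, and producing those bounds from the \emph{doubly--integrated} hypothesis (\ref{eq:B}) rather than from pointwise decay of $f$ and $\mathcal{F}f$ requires care. A subsidiary technical point is the bootstrap that yields an entire (rather than merely strip--holomorphic) extension of $F$: one must rule out, or handle separately, the possibility that $\mathcal{F}f$ is supported in a bounded interval, since that is the case in which the Fubini argument does not automatically deliver arbitrarily large admissible $u_0$. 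Once these two points are handled, the contradiction $\mathcal{F}f \equiv 0$ follows and gives $f = 0$.
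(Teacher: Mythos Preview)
The paper does not prove Proposition~\ref{pro:B}; it is quoted verbatim from \cite{Hogan2007} as a known result (Beurling's theorem, in the form written up by H\"ormander) and then used as a black box to derive the OLCT version in Theorem~\ref{thm:B}. So there is no in-paper proof to compare your proposal against.

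Your outline is the classical H\"ormander argument, and the overall architecture (Fubini $\Rightarrow$ exponential integrability $\Rightarrow$ holomorphic extension to a strip $\Rightarrow$ bootstrap to entire of order $\le 2$ $\Rightarrow$ Phragm\'en--Lindel\"of) is correct. Two of your details, however, are not quite how the argument actually closes. First, the bootstrap to an \emph{entire} extension does not proceed by ``ruling out'' compact support of $\mathcal{F}f$ separately; rather, once you know $\int |f(t)|e^{c|t|}\,{\rm d}t<\infty$ for some $c>0$, $\mathcal{F}f$ is real-analytic on $\mathbb{R}$, so its zero set has empty interior, and hence by your Fubini step $\int |f(t)|e^{|tu|}\,{\rm d}t<\infty$ for almost every $u\in\mathbb{R}$; monotonicity in $|u|$ then gives it for every $u$. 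Second, the endgame in H\"ormander's proof is not a bare $e^{\varepsilon z^2}$ Phragm\'en--Lindel\"of on $F$ alone: one introduces an auxiliary entire function built from the pairing of $f$ and $\mathcal{F}f$ (so that the hypothesis \eqref{eq:B} directly controls it), and then applies a maximum-principle argument to that object. Your $\Phi_\varepsilon$ scheme as written does not obviously connect the pointwise growth of $F$ in the complex plane to the doubly-integrated bound \eqref{eq:B}, which is precisely the obstacle you yourself flag. With these two adjustments the plan becomes the standard proof.
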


Next, we formulate the Beurling's UP in the OLCT domain.

\begin{Theorem}\label{thm:B}
Let $f\in L^1(\mathbb{R})$ and $O_{A}f\in L^1(\mathbb{R})$. If
\begin{equation}\label{thm:B1}
\int_{\mathbb{R}^2}|f(t)O_{A}f(u)|e^{|tu/b|}{\rm d}t{\rm d}u<\infty,
\end{equation}
then $f=0$.
\end{Theorem}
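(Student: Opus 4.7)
The plan is to reduce this to the classical Beurling's UP (Proposition~\ref{pro:B}) via exactly the same change-of-variable trick used in the proofs of Theorem~\ref{Thm:Na} and Theorem~\ref{thm:H1}. Set $g(t)=f(t)e^{j\frac{a}{2b}t^2+j\frac{1}{b}t\tau}$, so that $|g(t)|=|f(t)|$ and, by the computation leading to (\ref{Thm:Na2})--(\ref{Thm:Na3}), the function $u\mapsto G(ub)$ is precisely the Fourier transform of $g$, with $|G(u)|=\sqrt{b}\,|O_Af(u)|$.

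First I would verify the integrability hypotheses needed to apply Proposition~\ref{pro:B} to $g$. Since $|g|=|f|$, $f\in L^1(\mathbb{R})$ immediately gives $g\in L^1(\mathbb{R})$. For $\mathcal{F}g$, note that $|\mathcal{F}g(u)|=|G(ub)|=\sqrt{b}\,|O_Af(ub)|$, so a substitution $v=ub$ yields $\|\mathcal{F}g\|_1=b^{-1/2}\|O_Af\|_1<\infty$.

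Next I would translate the hypothesis (\ref{thm:B1}) into the Fourier-side statement (\ref{eq:B}) for $g$. Starting from
\[
\int_{\mathbb{R}^2}|g(t)\,\mathcal{F}g(u)|\,e^{|tu|}\,\mathrm{d}t\,\mathrm{d}u
=\sqrt{b}\int_{\mathbb{R}^2}|f(t)|\,|O_Af(ub)|\,e^{|tu|}\,\mathrm{d}t\,\mathrm{d}u,
\]
the substitution $v=ub$ (so $\mathrm{d}u=\mathrm{d}v/b$ and $|tu|=|tv/b|$) turns the right-hand side into $b^{-1/2}\int_{\mathbb{R}^2}|f(t)\,O_Af(v)|\,e^{|tv/b|}\,\mathrm{d}t\,\mathrm{d}v$, which is finite by (\ref{thm:B1}). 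Hence the Fourier-side integral for $g$ is finite.

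Applying Proposition~\ref{pro:B} to $g$ then forces $g\equiv 0$, and since $|f|=|g|$, we conclude $f\equiv 0$. There is no real obstacle here: the appearance of $e^{|tu/b|}$ (rather than $e^{|tu|}$) in the OLCT statement is exactly what is needed so that the dilation $u\mapsto ub$ between $O_Af$ and $\mathcal{F}g$ absorbs the $1/b$ cleanly, and the only thing to be careful about is tracking the $\sqrt{b}$ factors and the Jacobian of this dilation.
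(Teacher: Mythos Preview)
Your proposal is correct and follows essentially the same route as the paper: define $g(t)=f(t)e^{j\frac{a}{2b}t^2+j\frac{1}{b}t\tau}$, use that $\mathcal{F}g(u)=G(ub)$ with $|G(u)|=\sqrt{b}\,|O_Af(u)|$, check the $L^1$ hypotheses, reduce the double integral to the Fourier-side one via the substitution $v=ub$, and invoke Proposition~\ref{pro:B}. Your verification that $\mathcal{F}g\in L^1(\mathbb{R})$ is in fact slightly more explicit than the paper's, which simply asserts that $G(ub)\in L^1(\mathbb{R})$ follows from $G\in L^1(\mathbb{R})$.
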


\begin{proof}
Let $g(t)=f(t)e^{j\frac{a}{2b}t^2+j\frac{1}{b}t\tau}$, the OLCT $O_{A}f(u)$ be rewritten in the form of (\ref{Thm:Na1}), and $G(u)$ be defined by (\ref{Thm:Na2}). Since $f(t)\in L^1(\mathbb{R})$ and $O_{A}f(u)\in L^1(\mathbb{R})$, we get $g(t)\in L^1(\mathbb{R})$ and $G(u)\in L^1(\mathbb{R})$. Thus, $G(ub)\in L^1(\mathbb{R})$.
By (\ref{thm:B1}), we obtain
\begin{eqnarray*}
\int_{\mathbb{R}^2}|g(t)G(ub)|e^{|tu|}{\rm d}t{\rm d}u
&=&\frac{1}{b}\int_{\mathbb{R}^2}|g(t)G(u)|e^{|tu/b|}{\rm d}t{\rm d}u\\
&=&\frac{1}{\sqrt{b}}\int_{\mathbb{R}^2}|f(t)O_Af(u)|e^{|tu/b|}{\rm d}t{\rm d}u\\
&<&\infty.
\end{eqnarray*}
Hence, it follows from Proposition~\ref{pro:B} that $g=0$. Therefore, we have $f=0$.
\end{proof}

From Theorem~\ref{thm:B}, we know that it is not possible for a nonzero function $f$ and its OLCT $O_Af$ to decrease very rapidly simultaneously.

\subsection{Logarithmic UP}

Logarithmic UP was first introduced by W. Beckner in 1995 \cite{Beckner1995}. Its localization is measured in terms of entropy. It is derived by using Pitt's inequality. First, let us recall Pitt's inequality as follows.

\begin{Lemma}[\cite{Beckner1995}]\label{lem:P}
For $f\in \mathcal{S}(\mathbb{R})$ and $0\le\lambda<1$, we have
\begin{equation}\label{eq:P}
\int_{\mathbb{R}}|u|^{-\lambda}|\mathcal{F}f(u)|^2{\rm d}u
\le\frac{\Gamma^2(\frac{1-\lambda}{4})}{\Gamma^2(\frac{1+\lambda}{4})}\int_{\mathbb{R}}|t|^{\lambda}|f(t)|^2{\rm d}t,
\end{equation}
where $\mathcal{S}(\mathbb{R})$ denotes the Schwartz class, and $\Gamma(\cdot)$ is the Gamma function.
\end{Lemma}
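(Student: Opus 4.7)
The plan is to reduce Pitt's inequality to a one-parameter operator-norm computation that can be diagonalized by the Mellin transform. Setting $g(t) = |t|^{\lambda/2} f(t)$, the inequality is equivalent to showing $L^2$-boundedness of the operator $T \colon g \mapsto |u|^{-\lambda/2}\, \mathcal{F}(|t|^{-\lambda/2} g)(u)$ with norm at most $\Gamma((1-\lambda)/4)/\Gamma((1+\lambda)/4)$. Because multiplication by $|t|^{-\lambda/2}$, multiplication by $|u|^{-\lambda/2}$, and $\mathcal{F}$ all interact homogeneously with the unitary dilation $D_r g(t) = r^{1/2} g(rt)$, the operator $T$ commutes with $D_r$, so on each parity component it is diagonalized by the Mellin transform on the critical line $\mathrm{Re}(s) = 1/2$.

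Concretely, I would split $g$ into its even and odd parts and use the classical Fourier identity $\int_{\mathbb{R}} |t|^{\sigma - 1} e^{-iut}\,{\rm d}t = 2\Gamma(\sigma) \cos(\pi \sigma/2)\, |u|^{-\sigma}$ together with its sine analogue. Combining these with the multiplication-convolution duality of the Mellin transform expresses $T$ as multiplication by an explicit symbol $m_{\pm}(s)$, which after applying the reflection and duplication identities for $\Gamma$ reduces to ratios of the form $\Gamma(a + i\xi)/\Gamma(b + i\xi)$ with $a = (1 - \lambda)/4$ and $b = (1 + \lambda)/4$ (the even component; the odd component gives a slightly shifted analogue that is no larger). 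Mellin-Plancherel then identifies $\|T\|_{L^2 \to L^2}$ with $\sup_{\xi \in \mathbb{R}} |m_{\pm}(1/2 + i\xi)|$.

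The final step is to verify that this supremum is attained at $\xi = 0$, yielding the claimed constant. This is where the sharp value comes from and is the main obstacle: it relies on the fact that $|\Gamma(x + i\xi)|$ is strictly decreasing in $|\xi|$ for fixed $x > 0$, a consequence of the Weierstrass product, which forces $|m_{\pm}|$ to peak on the real axis. An alternative route, avoiding the Mellin calculus, is to use Plancherel to recognize $\int |u|^{-\lambda}|\mathcal{F}f(u)|^2\,{\rm d}u$ as a double integral with kernel proportional to $|x-y|^{\lambda - 1}$, reducing Pitt's inequality to a weighted Hardy--Littlewood--Sobolev (Stein--Weiss) inequality whose sharp constant has the same $\Gamma$-quotient form; this circumvents the parity bookkeeping but imports the full machinery of sharp HLS, which is heavier than needed in one dimension.
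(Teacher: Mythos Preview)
The paper does not prove this lemma; it is quoted verbatim from Beckner (1995) as a known result and then used as a black box to derive the OLCT analogue in Theorem~\ref{Thm:P}. There is therefore no proof in the paper against which to compare your proposal.

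That said, your sketch is essentially Beckner's own argument: rewrite the inequality as $L^2$-boundedness of a dilation-invariant operator, split by parity, diagonalize with the Mellin transform on the critical line, and read off the norm as the supremum of an explicit Gamma-quotient symbol. One step deserves more care. You justify that the supremum occurs at $\xi=0$ by invoking the fact that $|\Gamma(x+i\xi)|$ is strictly decreasing in $|\xi|$; but the symbol is a \emph{ratio} $|\Gamma(a+i\xi)/\Gamma(b+i\xi)|$ with $a=(1-\lambda)/4<b=(1+\lambda)/4$, and separate monotonicity of numerator and denominator does not by itself control the quotient. The clean fix, still via the Weierstrass product, is the identity
\[
\frac{|\Gamma(a+i\xi)|^{2}}{|\Gamma(b+i\xi)|^{2}}
=\frac{\Gamma(a)^{2}}{\Gamma(b)^{2}}\prod_{n\ge 0}\frac{1+\xi^{2}/(b+n)^{2}}{1+\xi^{2}/(a+n)^{2}},
\]
in which every factor is at most $1$ because $a<b$, with equality precisely at $\xi=0$. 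With this correction your outline is sound, and the alternative Stein--Weiss route you mention is also legitimate but, as you note, imports more machinery than the one-dimensional problem requires.
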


In order to obtain logarithmic UP associated with the OLCT, we derive the corresponding generalized Pitt's inequality for the OLCT as follows.

\begin{Theorem}\label{Thm:P}
For $f\in \mathcal{S}(\mathbb{R})$ and $0\le\lambda<1$, we have
\begin{equation}\label{Thm:P1}
b^{\lambda}\int_{\mathbb{R}}|u|^{-\lambda}|O_Af(u)|^2{\rm d}u
\le\frac{\Gamma^2(\frac{1-\lambda}{4})}{\Gamma^2(\frac{1+\lambda}{4})}\int_{\mathbb{R}}|t|^{\lambda}|f(t)|^2{\rm d}t.
\end{equation}
\end{Theorem}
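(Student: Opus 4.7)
The plan is to reduce the claim to the classical Pitt inequality (Lemma~\ref{lem:P}) via exactly the same auxiliary function that powered the proofs of Theorems~\ref{Thm:Na}, \ref{thm:H1} and~\ref{thm:B}. Introduce
\[
g(t) = f(t)\, e^{j\frac{a}{2b}t^{2} + j\frac{1}{b}t\tau},
\]
so that $|g(t)| = |f(t)|$, $g \in \mathcal{S}(\mathbb{R})$, and a direct computation (already carried out around~(\ref{Thm:Na2})) shows that the Fourier transform satisfies $\mathcal{F}g(u) = G(ub)$, while~(\ref{Thm:Na3}) gives $|G(u)| = \sqrt{b}\,|O_{A}f(u)|$.

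Next I would apply Pitt's inequality to $g$, getting
\[
\int_{\mathbb{R}} |u|^{-\lambda}\,|G(ub)|^{2}\,{\rm d}u
\;\le\;
\frac{\Gamma^{2}(\frac{1-\lambda}{4})}{\Gamma^{2}(\frac{1+\lambda}{4})}
\int_{\mathbb{R}} |t|^{\lambda}\,|f(t)|^{2}\,{\rm d}t,
\]
where on the right I used $|g|=|f|$. The remaining work is the change of variable $v=ub$ on the left, which introduces a Jacobian factor $1/b$ and the scaling $|u|^{-\lambda} = b^{\lambda}|v|^{-\lambda}$, and then the substitution $|G(v)|^{2} = b\,|O_{A}f(v)|^{2}$. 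Combining the two scalar factors produces the advertised prefactor $b^{\lambda}$ on the left-hand side of~(\ref{Thm:P1}).

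No step is really hard here; the only place where one has to be careful is bookkeeping the powers of $b$ coming from the dilation $u \mapsto ub$ inside $G$ and from the normalization $|G| = \sqrt{b}\,|O_{A}f|$. Specifically the exponents $b^{\lambda - 1}$ from the change of variables and $b$ from the modulus identity must combine cleanly to $b^{\lambda}$; if they do, the theorem follows at once, and this is the only computation I would write out explicitly. The Schwartz hypothesis on $f$ passes to $g$ because the multiplier $e^{j\frac{a}{2b}t^{2}+j\frac{1}{b}t\tau}$ is smooth with polynomially bounded derivatives, so Lemma~\ref{lem:P} applies without further justification.
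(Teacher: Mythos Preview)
Your proposal is correct and follows essentially the same route as the paper: introduce the auxiliary function $g$, apply the classical Pitt inequality to the pair $(g,\mathcal{F}g)=(g,G(b\,\cdot))$, then perform the substitution $v=ub$ together with $|G(v)|^{2}=b\,|O_{A}f(v)|^{2}$ to recover the factor $b^{\lambda}$. Your explicit check that $g\in\mathcal{S}(\mathbb{R})$ via the smooth unimodular multiplier is a small refinement the paper leaves implicit.
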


\begin{proof}
Let $g(t)=f(t)e^{j\frac{a}{2b}t^2+j\frac{1}{b}t\tau}$, the OLCT $O_{A}f(u)$ be rewritten as (\ref{Thm:Na1}), and $G(u)$ be denoted as (\ref{Thm:Na2}). Since $G(ub)$ is the Fourier transform of $g(t)$, by applying Lemma~\ref{lem:P}, we obtain
\begin{equation}\label{Thm:P2}
\int_{\mathbb{R}}|u|^{-\lambda}|G(ub)|^2{\rm d}u
\le\frac{\Gamma^2(\frac{1-\lambda}{4})}{\Gamma^2(\frac{1+\lambda}{4})}\int_{\mathbb{R}}|t|^{\lambda}|g(t)|^2{\rm d}t.
\end{equation}
Let $u^{\prime}=ub$ in (\ref{Thm:P2}), we have
\begin{equation}\label{Thm:P3}
\frac{1}{b}\int_{\mathbb{R}}\Big|\frac{u^{\prime}}{b}\Big|^{-\lambda}|G(u^{\prime})|^2{\rm d}u^{\prime}
\le\frac{\Gamma^2(\frac{1-\lambda}{4})}{\Gamma^2(\frac{1+\lambda}{4})}\int_{\mathbb{R}}|t|^{\lambda}|g(t)|^2{\rm d}t.
\end{equation}
Substituting $|g(t)|=|f(t)|$ and $G(u)=\sqrt{b}|O_Af(u)|$ into (\ref{Thm:P3}), we get
\begin{equation}\label{Thm:P4}
\frac{1}{b}\int_{\mathbb{R}}\Big|\frac{u}{b}\Big|^{-\lambda}|\sqrt{b}O_Af(u)|^2{\rm d}u
\le\frac{\Gamma^2(\frac{1-\lambda}{4})}{\Gamma^2(\frac{1+\lambda}{4})}\int_{\mathbb{R}}|t|^{\lambda}|f(t)|^2{\rm d}t,
\end{equation}
i.e.,
\[
b^{\lambda}\int_{\mathbb{R}}|u|^{-\lambda}|O_Af(u)|^2{\rm d}u
\le\frac{\Gamma^2(\frac{1-\lambda}{4})}{\Gamma^2(\frac{1+\lambda}{4})}\int_{\mathbb{R}}|t|^{\lambda}|f(t)|^2{\rm d}t,
\]
which completes the proof.
\end{proof}

Based on the generalized Pitt's inequality for the OLCT proposed in Theorem~\ref{Thm:P}, we investigate the logarithmic UP associated with the OLCT.

\begin{Theorem}\label{Thm:Log}
Let $f\in \mathcal{S}(\mathbb{R})$, and $\|f\|_2=1$, then
\begin{equation}\label{eq:Log1}
\int_{\mathbb{R}}|f(t)|^2\ln|t|{\rm d}t+\int_{\mathbb{R}}|O_{A}f(u)|^2\ln|u|{\rm d}u
\ge\ln b+\frac{\Gamma^{\prime}(1/4)}{\Gamma(1/4)}.
\end{equation}
\end{Theorem}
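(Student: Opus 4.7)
My plan is to derive (\ref{eq:Log1}) by differentiating the generalized Pitt inequality of Theorem~\ref{Thm:P} in the parameter $\lambda$ at $\lambda=0$, a strategy due to Beckner. Set
\[
\Phi(\lambda):=\frac{\Gamma^{2}((1-\lambda)/4)}{\Gamma^{2}((1+\lambda)/4)}\int_{\mathbb{R}}|t|^{\lambda}|f(t)|^{2}\,{\rm d}t-b^{\lambda}\int_{\mathbb{R}}|u|^{-\lambda}|O_Af(u)|^{2}\,{\rm d}u.
\]
By Theorem~\ref{Thm:P}, $\Phi(\lambda)\ge 0$ on $0\le\lambda<1$. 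At $\lambda=0$ the gamma prefactor and $b^{\lambda}$ both equal $1$, and the generalized Parseval formula (\ref{OLCT:pro3}) with $g=f$ gives $\int_{\mathbb{R}}|f|^{2}=\int_{\mathbb{R}}|O_Af|^{2}=\|f\|_{2}^{2}=1$; hence $\Phi(0)=0$. Consequently $\Phi'(0^{+})\ge 0$, which I will show is exactly the desired inequality.

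Next I compute $\Phi'(0)$ term by term. Writing $C(\lambda)=\Gamma^{2}((1-\lambda)/4)/\Gamma^{2}((1+\lambda)/4)$, logarithmic differentiation yields $C'(\lambda)/C(\lambda)=-\tfrac{1}{2}\psi((1-\lambda)/4)-\tfrac{1}{2}\psi((1+\lambda)/4)$, where $\psi=\Gamma'/\Gamma$, so $C'(0)=-\psi(1/4)=-\Gamma'(1/4)/\Gamma(1/4)$ since $C(0)=1$. Differentiating under the integrals at $\lambda=0$ produces $\int_{\mathbb{R}}\ln|t|\,|f(t)|^{2}\,{\rm d}t$ and $-\int_{\mathbb{R}}\ln|u|\,|O_Af(u)|^{2}\,{\rm d}u$ respectively, and $(b^{\lambda})'\big|_{\lambda=0}=\ln b$. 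Assembling these through the product/difference rule,
\[
0\le\Phi'(0)=-\frac{\Gamma'(1/4)}{\Gamma(1/4)}+\int_{\mathbb{R}}\ln|t|\,|f(t)|^{2}\,{\rm d}t-\ln b+\int_{\mathbb{R}}\ln|u|\,|O_Af(u)|^{2}\,{\rm d}u,
\]
which after rearrangement is exactly (\ref{eq:Log1}).

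The main obstacle will be justifying the interchange of $d/d\lambda$ with the two integrals near $\lambda=0$, since $\ln|t|$ is singular at the origin and $|u|^{-\lambda}$ is delicate for $\lambda>0$. Because $f\in\mathcal{S}(\mathbb{R})$, the map $\lambda\mapsto\int_{\mathbb{R}}|t|^{\lambda}|f(t)|^{2}\,{\rm d}t$ is smooth on $(-1,\infty)$ by a standard dominated-convergence argument with the envelope $(|t|^{-\epsilon}+|t|^{\epsilon})|\ln|t||\,|f(t)|^{2}\in L^{1}(\mathbb{R})$ valid for $\lambda\in[-\epsilon,\epsilon]$ with small $\epsilon>0$. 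For the frequency integral, I first observe that $O_Af\in\mathcal{S}(\mathbb{R})$: from (\ref{Thm:Na1})--(\ref{Thm:Na2}), $O_Af$ equals a unimodular chirp times a dilate of the Fourier transform of the Schwartz function $g(t)=f(t)e^{j\frac{a}{2b}t^{2}+j\frac{1}{b}t\tau}$, and the Schwartz class is preserved under Fourier transform, dilations, and chirp multiplication. The analogous dominated-convergence bound on $|u|^{-\lambda}|\ln|u||\,|O_Af(u)|^{2}$ then validates differentiation of the $O_Af$ integral and closes the argument.
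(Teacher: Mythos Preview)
Your argument is correct and follows essentially the same route as the paper: both proofs differentiate the generalized Pitt inequality of Theorem~\ref{Thm:P} at $\lambda=0$, using $\Phi(0)=0$ (equivalently $S(0)=0$ in the paper's notation, where $S=-\Phi$) together with Parseval to force the one-sided derivative inequality that unwinds to (\ref{eq:Log1}). Your added care in justifying differentiation under the integral via dominated convergence, and in noting $O_Af\in\mathcal{S}(\mathbb{R})$, goes beyond what the paper spells out but does not change the method.
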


\begin{proof}
Let
\[
M_\lambda\triangleq\frac{\Gamma^2(\frac{1-\lambda}{4})}{\Gamma^2(\frac{1+\lambda}{4})},
\]
and
\[
S(\lambda)\triangleq b^{\lambda}\int_{\mathbb{R}}|u|^{-\lambda}|O_Af(u)|^2{\rm d}u-M_\lambda\int_{\mathbb{R}}|t|^{\lambda}|f(t)|^2{\rm d}t.
\]
Taking the derivative of $S(\lambda)$ about the variable $\lambda$, we have
\begin{eqnarray*}
S^{\prime}(\lambda)&=&b^{\lambda}\ln b\int_{\mathbb{R}}|u|^{-\lambda}|O_Af(u)|^2{\rm d}u
-b^{\lambda}\int_{\mathbb{R}}|u|^{-\lambda}\ln|u||O_Af(u)|^2{\rm d}u{}\\
{}&&-M_\lambda\int_{\mathbb{R}}|t|^{\lambda}\ln{|t|}|f(t)|^2{\rm d}t
-(M_\lambda)^{\prime}\int_{\mathbb{R}}|t|^{\lambda}|f(t)|^2{\rm d}t,
\end{eqnarray*}
where
\begin{eqnarray*}
(M_\lambda)^{\prime}
&=&\bigg[-\frac{1}{2}\Gamma\Big(\frac{1-\lambda}{4}\Big)\Gamma^{\prime}\Big(\frac{1-\lambda}{4}\Big)\Gamma^2\Big(\frac{1+\lambda}{4}\Big)
-\frac{1}{2}\Gamma\Big(\frac{1+\lambda}{4}\Big)\Gamma^{\prime}\Big(\frac{1+\lambda}{4}\Big){}\\
{}&&\quad\times\Gamma^2\Big(\frac{1-\lambda}{4}\Big)\bigg]/\Gamma^4\Big(\frac{1+\lambda}{4}\Big).
\end{eqnarray*}
By Theorem~\ref{Thm:P}, we know
\[
S(\lambda)\le 0 \quad \mbox{for} \quad 0\le\lambda<1.
\]
Since $S(0)=0$,
we get
\[
S^{\prime}(0+)\le 0,
\]
that is,
\begin{align}
&\int_{\mathbb{R}}\ln|u||O_Af(u)|^2{\rm d}u+\int_{\mathbb{R}}\ln{|t|}|f(t)|^2{\rm d}t\nonumber\\
&\ge\ln b\int_{\mathbb{R}}|O_Af(u)|^2{\rm d}u
+\frac{\Gamma^{\prime}(1/4)}{\Gamma(1/4)}\int_{\mathbb{R}}|f(t)|^2{\rm d}t.\label{eq:Log2}
\end{align}
By the generalized Parseval formula for the OLCT (\ref{OLCT:pro3}), we get
\begin{equation}\label{eq:Log3}
\|O_Af\|_2=\|f\|_2=1.
\end{equation}
Substituting (\ref{eq:Log3}) into (\ref{eq:Log2}), we have
\[
\int_{\mathbb{R}}|f(t)|^2\ln{|t|}{\rm d}t+\int_{\mathbb{R}}|O_Af(u)|^2\ln|u|{\rm d}u\ge\ln b+\frac{\Gamma^{\prime}(1/4)}{\Gamma(1/4)}.
\]
This completes the proof.
\end{proof}

Applying Jassen's inequality to (\ref{eq:Log1}), it is easily to show that logarithmic UP proposed in Theorem~\ref{Thm:Log} implies Heisenberg's UP derived in \cite{Stern2007}.

\subsection{Entropic UP}

Entropic UP is a fundamental tool in information theory, physical quantum, and harmonic analysis. Its localization is measured in terms of Shannon entropy.
Let $\rho$ be a probability density function on $\mathbb{R}$. The Shannon entropy of $\rho$ is denoted as
\begin{equation}\label{eq:S}
E(\rho)=-\int_{\mathbb{R}}\rho(t)\ln \rho(t){\rm d}t.
\end{equation}
In what follows, we revisit entropic UP associated with the Fourier transform \cite{FGS1997}.

\begin{Proposition}[\cite{FGS1997}]\label{pro:E}
Let $f\in L^2(\mathbb{R})$, and $\|f\|_2=1$, then
\begin{equation}\label{eq:E}
E(|f|^2)+E(|\mathcal{F}f|^2)\ge \ln(\pi e).
\end{equation}
\end{Proposition}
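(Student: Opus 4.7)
The plan is to derive this classical entropic uncertainty via the Hirschman--Beckner strategy of differentiating the sharp Hausdorff--Young inequality at the Plancherel endpoint. The Babenko--Beckner inequality asserts that for $1\le p\le 2$ with conjugate exponent $p'=p/(p-1)$,
\[
\|\mathcal{F}f\|_{p'}\le C_p\,\|f\|_p,
\]
with optimal constant $C_p$ depending on the normalization of $\mathcal{F}$; under the convention used here, $C_2=1$ so that equality holds at $p=2$ by Plancherel.

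First I would introduce the deficit function
\[
\Phi(p)\triangleq \ln\|\mathcal{F}f\|_{p'}-\ln\|f\|_p-\ln C_p,\qquad 1\le p\le 2.
\]
Hausdorff--Young forces $\Phi(p)\le 0$ on $[1,2]$, while the hypothesis $\|f\|_2=\|\mathcal{F}f\|_2=1$ together with $C_2=1$ gives $\Phi(2)=0$. Since $\Phi$ attains its maximum at the right endpoint, the left-hand derivative satisfies $\Phi'(2^-)\le 0$; this one-sided inequality is the vehicle that carries the entropic bound.

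Next I would differentiate $\Phi$ term by term at $p=2$. Writing $\ln\|f\|_p=\tfrac{1}{p}\ln\int|f|^p\,dt$ and evaluating at $p=2$ under $\|f\|_2=1$ produces $\tfrac{1}{2}\int|f|^2\ln|f|\,dt$. Using the chain rule with $dp'/dp|_{p=2}=-1$, the corresponding derivative of $\ln\|\mathcal{F}f\|_{p'}$ becomes $-\tfrac{1}{2}\int|\mathcal{F}f|^2\ln|\mathcal{F}f|\,du$. Finally, a direct computation of $\frac{d}{dp}\ln C_p$ at $p=2$ contributes the constant term equal (up to the factor $\tfrac{1}{2}$) to $\ln(\pi e)$. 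Assembling these pieces in $\Phi'(2^-)\le 0$ and multiplying through by $2$ to convert $\ln|f|$ into $\ln|f|^2$ delivers exactly $E(|f|^2)+E(|\mathcal{F}f|^2)\ge\ln(\pi e)$.

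The main obstacle I expect is the bookkeeping of the Hausdorff--Young constant: both the sharp value of $C_p$ and the precise form of $(\ln C_p)'|_{p=2}$ are sensitive to the Fourier normalization, and producing the exact constant $\ln(\pi e)$ rather than one shifted by factors involving $\ln(2\pi)$ requires careful attention to conventions. A secondary technical point is justifying the interchange of differentiation and integration needed to differentiate $\|f\|_p$ and $\|\mathcal{F}f\|_{p'}$ in $p$; this is routine for $f\in\mathcal{S}(\mathbb{R})$ and extends to general unit-norm $f\in L^2(\mathbb{R})$ by a standard density argument, with the usual caveat that one must interpret $E(|f|^2)$ as $-\infty$ when the integrand is not absolutely integrable.
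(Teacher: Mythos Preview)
The paper does not supply its own proof of this proposition: it is quoted verbatim as a known result from Folland--Sitaram \cite{FGS1997} and then used as a black box in the proof of Theorem~\ref{Thm:En}. Your proposal sketches the classical Hirschman--Beckner argument (differentiate the sharp Hausdorff--Young/Babenko--Beckner inequality at $p=2$), which is precisely the proof appearing in the cited survey, so in substance you are reproducing the source the paper defers to.

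One small slip worth flagging: your inequality on the one-sided derivative is reversed. If $\Phi(p)\le 0$ on $[1,2]$ with $\Phi(2)=0$, then for $p<2$ one has $\frac{\Phi(2)-\Phi(p)}{2-p}\ge 0$, hence $\Phi'(2^-)\ge 0$, not $\le 0$. This does not damage the argument---once you assemble the three derivative terms correctly the sign comes out right and yields the desired bound---but the sentence as written is backwards. Your caveats about the normalization-dependent constant and the density/approximation step for general $f\in L^2$ are both apt.
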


Next, we propose the entropic UP in the OLCT domain.

\begin{Theorem}\label{Thm:En}
Let $f\in L^2(\mathbb{R})$, and $\|f\|_2=1$, then
\begin{equation}\label{Thm:En1}
E(|f|^2)+E(|O_{A}f|^2)\ge \ln(\pi eb).
\end{equation}
\end{Theorem}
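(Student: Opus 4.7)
My plan is to reduce the OLCT entropic inequality to the classical Fourier entropic inequality (Proposition~\ref{pro:E}) using exactly the same change of variables that was used throughout the paper, namely $g(t)=f(t)e^{j\frac{a}{2b}t^{2}+j\frac{1}{b}t\tau}$, so that $G(ub)$ is the ordinary Fourier transform of $g$, with $|g(t)|=|f(t)|$ and $|G(u)|=\sqrt{b}\,|O_{A}f(u)|$. Since the normalization is preserved ($\|g\|_{2}=\|f\|_{2}=1$, and Plancherel gives $\|\mathcal{F}g\|_{2}=1$), I can apply Proposition~\ref{pro:E} to $g$ to obtain
\[
E(|g|^{2})+E(|\mathcal{F}g|^{2})\ge\ln(\pi e).
\]

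The first term is immediate: $|g|^{2}=|f|^{2}$, so $E(|g|^{2})=E(|f|^{2})$. The work is in translating the second term back to $O_{A}f$. I would write $|\mathcal{F}g(u)|^{2}=|G(ub)|^{2}=b\,|O_{A}f(ub)|^{2}$, expand the Shannon entropy definition, and then perform the substitution $v=ub$ with $du=dv/b$ to get
\[
E(|\mathcal{F}g|^{2})=-\int_{\mathbb{R}}|O_{A}f(v)|^{2}\ln\!\bigl(b\,|O_{A}f(v)|^{2}\bigr)\,\mathrm{d}v.
\]
Splitting the logarithm as $\ln b+\ln|O_{A}f(v)|^{2}$ and using the generalized Parseval formula (\ref{OLCT:pro3}) which yields $\|O_{A}f\|_{2}=1$, the first piece contributes $-\ln b$ and the second is precisely $E(|O_{A}f|^{2})$. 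Hence $E(|\mathcal{F}g|^{2})=E(|O_{A}f|^{2})-\ln b$.

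Substituting these two identifications back into the Fourier-side inequality produces
\[
E(|f|^{2})+E(|O_{A}f|^{2})\ge\ln(\pi e)+\ln b=\ln(\pi eb),
\]
which is the claim. There is no real obstacle here; the only subtlety worth checking carefully is the sign produced by the change of variables inside the $\ln$, since it is the factor of $b$ coming out of $\ln(b\,|O_{A}f|^{2})$ that is responsible for the extra $\ln b$ improvement over the Fourier bound. This is the same mechanism that produced the $\ln b$ term in Theorem~\ref{Thm:Log}, so the two proofs run in parallel and the result is consistent with (\ref{eq:Log1}).
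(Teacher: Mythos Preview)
Your proposal is correct and follows essentially the same approach as the paper: the same auxiliary function $g(t)=f(t)e^{j\frac{a}{2b}t^{2}+j\frac{1}{b}t\tau}$, the same application of Proposition~\ref{pro:E} to $g$, the same substitution $v=ub$, and the same use of the generalized Parseval formula (\ref{OLCT:pro3}) to split off the $\ln b$ term. The only cosmetic difference is that you work at the level of the entropy functional $E(\cdot)$ while the paper expands the integrals explicitly before recombining.
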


\begin{proof}
Let $g(t)=f(t)e^{j\frac{a}{2b}t^2+j\frac{1}{b}t\tau}$, the OLCT $O_{A}f(u)$ be rewritten in the form of (\ref{Thm:Na1}), and $G(u)$ be defined by (\ref{Thm:Na2}). Since $\|f\|_2=1$, we have $\|g\|_2=\|f\|_2=1$. Hence, by Proposition~\ref{pro:E}, we get
\begin{equation}\label{Thm:En2}
-\int_{\mathbb{R}}|g(t)|^2\ln|g(t)|^2{\rm d}t-\int_{\mathbb{R}}|G(ub)|^2\ln|G(ub)|^2{\rm d}u\ge\ln(\pi e).
\end{equation}
Let $u^{\prime}=ub$ in (\ref{Thm:En2}), we have
\begin{equation}\label{Thm:En3}
-\int_{\mathbb{R}}|g(t)|^2\ln|g(t)|^2{\rm d}t-\frac{1}{b}\int_{\mathbb{R}}|G(u^{\prime})|^2\ln|G(u^{\prime})|^2{\rm d}u^{\prime}\ge\ln(\pi e).
\end{equation}
Substituting $|g(t)|=|f(t)|$, and $|G(u)|=\sqrt{b}|O_Af(u)|$ into (\ref{Thm:En3}), we obtain
\[
-\int_{\mathbb{R}}|f(t)|^2\ln|f(t)|^2{\rm d}t-\int_{\mathbb{R}}|O_Af(u)|^2\ln\big(b|O_{A}f(u)|^2\big){\rm d}u
\ge\ln(\pi e).
\]
Using the fact that $\|O_Af\|_2=\|f\|_2=1$, we get
\begin{eqnarray*}
E(|f|^2)+E(|O_{A}f|^2)
&=&-\int_{\mathbb{R}}|f(t)|^2\ln|f(t)|^2{\rm d}t-\int_{\mathbb{R}}|O_Af(u)|^2\ln|O_{A}f(u)|^2{\rm d}u\\
&\ge&\ln b\int_{\mathbb{R}}|O_{A}f(u)|^2{\rm d}u+\ln(\pi e)\\
&=&\ln b+\ln(\pi e)\\
&=&\ln(\pi e b),
\end{eqnarray*}
which completes the proof.
\end{proof}

Theorem~\ref{Thm:En} measures the incompatibility of measurements in terms of Shannon entropy.
We next demonstrate that Theorem~\ref{Thm:En} can also imply the Heisenberg's UP mentioned in \cite{Stern2007}.

At the beginning, let us introduce some notations. Let $\mu$ be a probability measure on $\mathbb{R}$. The variance of $\mu$ is defined as
\begin{equation}\label{Add1}
V(\mu)=\inf_{\xi\in \mathbb{R}}\int_{\mathbb{R}}(t-\xi)^2{\rm{d}}{\mu(t)}.
\end{equation}
If the integral on the right side of $(\ref{Add1})$ is finite for one value of $\xi$, then it is finite for every $\xi$. In this case, it can achieve the minimization when $\xi$ is the mean of $\mu$:
\[
M(\mu)=\int_{\mathbb{R}}t{\rm{d}}\mu(t).
\]
For $\rho\in L^1(\mathbb{R})$, if ${\rm{d}}{\mu(t)}=\rho(t){\rm{d}}t$, we say $\rho$ is a probability density function, and use the notations $M(\rho)$ and $V(\rho)$ instead of $M(\mu)$ and $V(\mu)$, respectively.

Using the fact $\|f\|_2=\|O_Af\|_2$, we know that if $f\in L^2(\mathbb{R})$ and $\|f\|_2=1$, then $|f|^2$ and $|O_{A}f|^2$ both are probability density functions on $\mathbb{R}$.

\begin{Lemma}\cite[Theorem~5.1]{FGS1997}\label{Lem:Add}
Let $\rho$ be a probability density function on $\mathbb{R}$ with finite variance, then $E(\rho)$ is well defined and
\begin{equation}
E(\rho)\le\frac{1}{2}\ln[2\pi e V(\rho)].
\end{equation}
\end{Lemma}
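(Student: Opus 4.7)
The plan is to deduce the entropy bound by comparing $\rho$ against the Gaussian density that shares its mean and variance, and extracting the inequality from the non-negativity of the Kullback--Leibler divergence (Gibbs' inequality). This reflects the classical maximum-entropy property of the Gaussian: among probability densities of prescribed variance, the Gaussian is the unique maximizer of $E(\cdot)$.

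Concretely, I would set $m=M(\rho)$ and $V=V(\rho)$, and introduce the reference density
\[
\phi(t)=\frac{1}{\sqrt{2\pi V}}\exp\!\left(-\frac{(t-m)^2}{2V}\right),
\]
which has the same mean and variance as $\rho$. Applying Jensen's inequality to the concave function $\ln$ against the probability measure $\rho(t)\,{\rm d}t$ with integrand $\phi/\rho$ gives
\[
\int_{\mathbb{R}}\rho(t)\ln\frac{\phi(t)}{\rho(t)}\,{\rm d}t\le\ln\int_{\mathbb{R}}\phi(t)\,{\rm d}t=0,
\]
so that $-E(\rho)=\int \rho\ln\rho\,{\rm d}t\ge\int \rho\ln\phi\,{\rm d}t$. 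Then I would just compute the right-hand side: since $\ln\phi(t)=-\tfrac{1}{2}\ln(2\pi V)-(t-m)^2/(2V)$ and $\int(t-m)^2\rho(t)\,{\rm d}t=V$ (because $m=M(\rho)$ attains the infimum in (\ref{Add1})), this integral equals $-\tfrac{1}{2}\ln(2\pi V)-\tfrac{1}{2}$. Collecting constants as $\tfrac{1}{2}\ln(2\pi V)+\tfrac{1}{2}=\tfrac{1}{2}\ln(2\pi eV)$ produces the stated bound $E(\rho)\le\tfrac{1}{2}\ln[2\pi eV(\rho)]$.

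The one point requiring care, and the reason the finite variance hypothesis is assumed, is the well-definedness of the various integrals; this will be the main technical obstacle rather than the algebraic manipulation above. The cross-entropy $\int \rho\ln\phi\,{\rm d}t$ is unambiguously finite because its integrand is quadratic in $t$ and $V(\rho)<\infty$ ensures quadratic integrability against $\rho$. For the Gibbs step, the elementary inequality $x\ln x\ge -1/e$ applied pointwise to $\rho/\phi$ yields $\rho\ln(\rho/\phi)\ge -\phi/e$, which bounds the negative part of the integrand by an $L^1$ function and legitimizes the Jensen estimate as an inequality in $[0,+\infty]$. Combining these two observations, $\int \rho\ln\rho\,{\rm d}t$ is well defined in $(-\infty,+\infty]$ and is bounded below by $-\tfrac{1}{2}\ln(2\pi eV)$, which is precisely the entropy bound and simultaneously rules out $E(\rho)=+\infty$.
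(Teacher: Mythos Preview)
Your argument is the standard maximum-entropy proof via Gibbs' inequality and is correct, including the care you take with well-definedness: bounding the negative part of $\rho\ln(\rho/\phi)$ by $\phi/e$ legitimizes the KL integral, and the finite-variance hypothesis makes the cross-entropy $\int\rho\ln\phi$ finite, so $E(\rho)$ is well defined and bounded above as claimed. Note, however, that the paper does not supply its own proof of this lemma; it is quoted directly from \cite[Theorem~5.1]{FGS1997} and used as a black box, so there is no in-paper argument to compare against. Your write-up is exactly the classical proof one finds in the cited reference.
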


Combining Theorem~\ref{Thm:En} and Lemma~\ref{Lem:Add}, we then immediately get the Heisenberg's UP \cite{Stern2007} as follows.
\begin{Corollary}\label{Coro:add}
Let $f\in L^2(\mathbb{R})$, and $\|f\|_2=1$, then
\begin{equation}\label{Coro:add1}
 V(|f|^2) V(|O_{A}f|^2)\ge\frac{b^2}{4},
\end{equation}
which implies
\[
\int_{\mathbb{R}}(t-\xi)^2|f(t)|^2{\rm{d}}t\int_{\mathbb{R}}(u-\zeta)^2|O_Af(u)|^2{\rm{d}}u\ge\frac{b^2}{4}
\]
for any $f\in L^2(\mathbb{R})$ and any $\xi,\;\zeta\in \mathbb{R}$.
\end{Corollary}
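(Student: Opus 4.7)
The plan is to chain the entropic uncertainty bound of Theorem~\ref{Thm:En} with the entropy-variance inequality of Lemma~\ref{Lem:Add} applied to the two probability densities $|f|^2$ and $|O_Af|^2$. Since the hypothesis $\|f\|_2=1$ together with the generalized Parseval formula (\ref{OLCT:pro3}) gives $\|O_Af\|_2=1$, both $|f|^2$ and $|O_Af|^2$ are genuine probability densities on $\mathbb{R}$, so Lemma~\ref{Lem:Add} is applicable to each.

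First, I would invoke Lemma~\ref{Lem:Add} twice to obtain the two estimates
$$E(|f|^2)\le\tfrac{1}{2}\ln\bigl[2\pi e\,V(|f|^2)\bigr],\qquad E(|O_Af|^2)\le\tfrac{1}{2}\ln\bigl[2\pi e\,V(|O_Af|^2)\bigr],$$
and add them to bound the left-hand side of (\ref{Thm:En1}) from above by $\tfrac{1}{2}\ln\bigl[(2\pi e)^2 V(|f|^2)V(|O_Af|^2)\bigr]$. Combining this with the lower bound $E(|f|^2)+E(|O_Af|^2)\ge\ln(\pi e b)$ supplied by Theorem~\ref{Thm:En} and exponentiating yields $(\pi e b)^2\le (2\pi e)^2 V(|f|^2)V(|O_Af|^2)$, which after cancellation is exactly (\ref{Coro:add1}).

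The second assertion then follows directly from the variational definition (\ref{Add1}): for any fixed $\xi,\zeta\in\mathbb{R}$ one has $\int_{\mathbb{R}}(t-\xi)^2|f(t)|^2\,{\rm d}t\ge V(|f|^2)$ and $\int_{\mathbb{R}}(u-\zeta)^2|O_Af(u)|^2\,{\rm d}u\ge V(|O_Af|^2)$, so multiplying the two and using (\ref{Coro:add1}) finishes the proof.

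There is no real obstacle here; every step is either a direct application of a result stated earlier in the paper or a routine algebraic manipulation inside a logarithm. The only mild subtlety is to ensure Lemma~\ref{Lem:Add} is not applied vacuously, i.e.\ to note that if either $V(|f|^2)$ or $V(|O_Af|^2)$ is infinite then (\ref{Coro:add1}) is trivial, so one may harmlessly assume both variances are finite before invoking the entropy bound.
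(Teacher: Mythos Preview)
Your proposal is correct and is exactly the argument the paper has in mind: the paper does not spell out a proof but simply says the corollary follows ``combining Theorem~\ref{Thm:En} and Lemma~\ref{Lem:Add}'', and your chain of inequalities (add the two entropy--variance bounds, compare with the entropic lower bound $\ln(\pi e b)$, then exponentiate and cancel) is precisely that combination. Your remark that the infinite-variance case is trivial is a sensible addendum that the paper leaves implicit.
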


\section{Conclusion}\label{sec:C}

In this paper, five different forms of UPs associated with the OLCT are proposed. First, we derive Nazarov's UP for the OLCT, which is a quantitative version of UP. It shows that it is not possible for a nonzero function $f$ and its OLCT $O_Af$ to both be supported on sets of finite Lebesgue measure. Second, based on the decreasing property, we propose two UPs in the OLCT domain: Hardy's UP and its variant-Beurling's UP. These two UPs state that it is impossible for a nonzero function $f$ and its OLCT $O_Af$ to both decrease very rapidly. Finally, we generalize Pitt's inequality to the OLCT domain, and then obtain logarithmic UP for the OLCT. Moreover, with regard to Shannon entropy, we extend entropic UP to the OLCT domain. In the future work, we will consider these UPs for discrete signals.

\section*{Acknowledgements}
The author thanks the referees very much for carefully reading the paper and for elaborate and valuable suggestions.

\end{document}